\newcommand{\zeroone}{0$-$1}
\newcommand{\SETH}{\mathbf{SETH}}
\newcommand{\reals}{\mathbb{R}}
\theoremstyle{plain}
\newtheorem{theorem}{Theorem}[section]
\newtheorem{lemma}[theorem]{Lemma}
\theoremstyle{definition}
\newtheorem{definition}{Definition}[section]
\newclass{\OPP}{OPP}
\newclass{\OP}{OP}
\newclass{\BPEXP}{BPEXP}
\newlang{\linspckt}{\textsc{Linear-size Series-Parallel Circuits}}
\newlang{\sat}{\textsc{Sat}}
\newlang{\wsp}{\textsc{Workflow Satisfiability Problem}}
\newlang{\formulasat}{\textsc{Formula Sat}}
\newlang{\subsetsum}{\textsc{Subset Sum}}
\newlang{\seth}{\mathbf{SETH}}
\newlang{\maxtwosat}{\textsc{Max-2-Sat}}
\newlang{\smt}{\textsc{smt}}
\newlang{\pit}{\textsc{pit}}
\newlang{\CAPP}{CAPP}
\newlang{\perm}{\textsc{Permanent}}
\newlang{\satisfiability}{\textsc{Satisfiability}}
\newlang{\uniquesat}{\textsc{unique-sat}}
\newcommand{\ksat}[1][k]{\lang{\textsc{$#1$-sat}}}
\newlang{\dkcsp}{\textsc{$(d,k)$-csp}}
\newlang{\cnf}{\textsc{cnf}}
\newlang{\maxsat}{\textsc{maxsat}}
\newlang{\dnf}{\textsc{dnf}}
\newlang{\cnfsat}{\textsc{cnfsat}}
\newlang{\cktsat}{\textsc{Circuit Sat}}
\newlang{\lincktsat}{\textsc{Linear-size Circuit Sat}}
\newlang{\turingsat}{\textsc{Turing Sat}}
\newlang{\ind}{\textsc{Independent Set}}
\newlang{\chromaticnumber}{\textsc{Chromatic Number}}
\newlang{\maxind}{\textsc{Maximum Independent Set}}
\newlang{\subiso}{\textsc{Subgraph Isomorphism}}
\newlang{\hamp}{\textsc{Hamiltonian Path}}
\newlang{\parityfunction}{\textsc{Parity}}
\newlang{\hamc}{\textsc{Hamiltonian Cycle}}
\newlang{\clique}{\textsc{Clique}}
\newlang{\colorability}{\textsc{Colorability}}
\newlang{\listcoloring}{\textsc{List Coloring}}
\newlang{\dominatingset}{\textsc{Dominating Set}}
\newlang{\feedbackvertexset}{\textsc{Feedback Vertex Set}}
\newlang{\longestpath}{\textsc{Longest Path}}
\newlang{\kcolorability}{\textsc{$k$-Colorability}}
\newlang{\threecolorability}{\textsc{$3$-Colorability}}
\newlang{\fourcolorability}{\textsc{$4$-Colorability}}
\newlang{\vertexcover}{\textsc{Vertex Cover}}
\newlang{\ksetcover}{\textsc{$k$-Set Cover}}
\title{0-1 Integer Linear Programming with a Linear Number of Constraints}
\author{{\large\sc Russell Impagliazzo}\thanks{This research is supported by NSF
grant CCF-1213151 from the
Division  of Computing and Communication Foundations.
Any opinions,
findings and conclusions or
recommendations expressed in this material are those
of the authors and do
not necessarily reflect the
views of the National Science Foundation.}
\\
\and {\large\sc Shachar Lovett}\thanks{Supported by NSF CAREER award 1350481.} \\
\and {\large\sc Ramamohan Paturi}\footnotemark[1] \\
\and {\large\sc Stefan Schneider}\footnotemark[1] \\
\\
Department of Computer Science and Engineering\\
University of California, San Diego\\La Jolla, CA 92093-0404, USA\\
E-Mail: \{russell, slovett, paturi, stschnei\}@cs.ucsd.edu
}
\date{January 2014}
\begin{document}

\begin{titlepage}
	\maketitle
	%%% Local Variables: 
%%% mode: latex
%%% TeX-master: "ilp"
%%% End: 

We give an exact algorithm for the $\zeroone$ Integer Linear
Programming problem with a linear number of constraints that improves
over exhaustive search by an exponential factor. Specifically, our
algorithm runs in time $2^{(1-\text{poly}(1/c))n}$ where $n$ is the
number of variables and $cn$ is the number of constraints. The key
idea for the algorithm is a reduction to the \emph{Vector Domination
  problem} and a new algorithm for that subproblem.

	\thispagestyle{empty}
	\clearpage
\end{titlepage}
%%% Local Variables: 
%%% mode: latex
%%% TeX-master: ilp
%%% End: 

\section{Introduction}

A large number of real world problems from social sciences,
economics, logistics and other areas are very naturally expressed as
integer programs. Various variants of Integer Programming have been
studied, such as bounds on the solution vector, pure or mixed integer
programs, and linear, nonlinear or even nonconvex constraints, as well
as a number of other restrictions on the constraints. Most forms of
Integer Programming are $\NP$-hard, with some variants in $\P$ (such
as linear, totally unimodular constraints) and the variant with
nonconvex constraints and unbounded solution vector is even
undecidable.  For a survey on the large body of work for solving
various variants of Integer Programming exactly or approximately, we
refer to the survey by Genova and Guliashki
\cite{GenovaGuliashki_2011}.

In this paper we concentrate on the special case of $\zeroone$ Integer
Linear Programming (ILP). Given $n$ Boolean variables and $m$ linear
constraints, the problem is to find an assignment of either $0$ or $1$
to the variables such that all constraints are satisfied. For this
special case, we omit the objective function to be optimized and only
consider the problem of deciding if a set of constraints is
feasible. Since any objective value can be at most exponentially large
in the input size, binary search can reduce the optimization problem
to the feasibility problem with only polynomial overhead.

The problem can trivially be solved in time $O(2^n \text{poly}(n,m))$
using exhaustive search. On the other hand, an algorithm to solve the
problem in time $O\left(2^{(1-s)n}\right)$ for $s >0$ when $m$ is
superlinear in $n$ would contradict the Strong Exponential Time
Hypothesis ($\SETH$) \cite{ImpagliazzoPaturi_1999_jcss}, which says
that for every $s>0$ there is a $k$ such that $\ksat$ cannot be solved
in time  $O\left(2^{(1-s)n}\right)$. $\cnfsat$ on $m$ clauses is a
special case of $\zeroone$ ILP on $m$ constraints and the
Sparsification Lemma \cite{ImpagliazzoPaturiZane_1998_jcss} allows us to
reduce $\ksat$ to $\cnfsat$ on $c(k)n$ many clauses. Short of
proving or disproving $\SETH$, we can then ask the following question:
Given a $\zeroone$ integer linear program on $n$ variables and a
linear number of constraints, is it possible to decide feasibility
faster than exhaustive search? We answer this question
affirmatively. In particular, we prove the following.

\begin{theorem}
  \label{thmmain}
  Given a $\zeroone$ integer program on $n$ variables and $m = cn$
  constraints, it is possible to decide the feasibility of the program
  in time $2^{(1-s(c))n}$ where $s(c) =
  \Omega\left(\text{poly}\left(\frac1c\right)\right)$. 
\end{theorem}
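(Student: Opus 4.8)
\medskip
\noindent\textbf{Proof plan.} The plan is to reduce feasibility to a purely geometric ``Vector Domination'' problem by meet-in-the-middle, and then to design a recursive algorithm for Vector Domination whose savings over exhaustive search is governed by $\log_2(\text{number of points})$ divided by the ambient dimension, which for the instances we produce equals $1/(2c)$.

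\emph{The reduction.} First normalize the program: replace each constraint $\sum_i w_{ji} x_i \bowtie b_j$ by one or two ``$\le$'' inequalities (an equality splits into two, a ``$\ge$'' is negated), so $m$ grows only by a constant factor and all coefficients remain integers of bit-length polynomial in the input. Split the variables into $V_1 \sqcup V_2$ with $|V_1| = |V_2| = n/2$. To a partial assignment $\sigma \colon V_1 \to \{0,1\}$ associate the slack vector $u_\sigma \in \ints^m$ with $(u_\sigma)_j = b_j - \sum_{i \in V_1} w_{ji}\sigma_i$, and to $\tau \colon V_2 \to \{0,1\}$ associate $v_\tau \in \ints^m$ with $(v_\tau)_j = \sum_{i \in V_2} w_{ji}\tau_i$. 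Then $\sigma \cup \tau$ is feasible iff $(u_\sigma)_j \ge (v_\tau)_j$ for all $j$, i.e.\ iff $u_\sigma$ coordinatewise-dominates $v_\tau$. So the ILP reduces to deciding whether the point sets $A = \{u_\sigma : \sigma\}$ and $B = \{v_\tau : \tau\}$, each of size at most $2^{n/2}$, contain a dominating pair, in dimension $d = m = cn$; all entries are integers of polynomial bit-length, so any single-coordinate sort or partition of $A$ or $B$ costs $\mathrm{poly}(n,m)$ plus the input length.

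\emph{The Vector Domination algorithm.} It suffices to solve Vector Domination on two sets of size $\le N$ in $\ints^d$ in time $N^{2(1 - s)}$ with $s = \Omega\!\left(\mathrm{poly}(\log_2 N / d)\right)$; plugging in $N = 2^{n/2}$, $d = cn$ gives $\log_2 N/d = 1/(2c)$ and hence the theorem. The algorithm recurses on a triple $(A, B, C)$ where $C$ is the set of still-active coordinates, with the base cases $|A| \le 1$, $|B| \le 1$, or $C = \varnothing$ handled directly. In the recursive case we choose an active coordinate $j$ and a threshold $t$, split $A = A_{\le} \sqcup A_{>}$ by whether $a_j \le t$ or $a_j > t$ and likewise $B = B_{\le} \sqcup B_{>}$, observe that every pair in $A_{>} \times B_{\le}$ has coordinate $j$ satisfied and every pair in $A_{\le} \times B_{>}$ has it violated, and accordingly recurse on $(A_{>}, B_{\le}, C \setminus\{j\})$, on $(A_{\le}, B_{\le}, C)$ and on $(A_{>}, B_{>}, C)$, ORing the three answers and discarding $A_{\le} \times B_{>}$. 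The key is the choice of $(j,t)$: for a parameter $\epsilon = \epsilon(c)$, we seek a coordinate that is simultaneously ``balanced'' for $A$ and for $B$, i.e.\ some $t$ with $\min(|A_{\le}|,|A_{>}|) \ge \epsilon|A|$ and $\min(|B_{\le}|,|B_{>}|) \ge \epsilon|B|$. If such a coordinate exists, all three children shrink both $|A|$ and $|B|$ by a factor $1-\epsilon$, and one of them also loses a coordinate. If no coordinate is that balanced, then on every coordinate either $A$ or $B$ has a single value of multiplicity exceeding $(1-\epsilon)$ of its size; using that value as $t$, one child is discarded, one child of dimension one smaller retains almost all the points, and the third has one of its two sets shrunk by a factor $1-\epsilon$ while keeping all coordinates.

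\emph{The analysis and the main obstacle.} The bottleneck is the recursion analysis, and this is where the polynomial dependence on $1/c$ comes from. One sets up a potential such as $\Phi = \log_2|A| + \log_2|B| + \lambda|C|$ and argues that it strictly decreases on each recursive call: in the balanced regime the three-way branching is paid for by the $(1-\epsilon)$-factor shrinkage of both point sets, while in the unbalanced regime progress is made by eliminating coordinates, of which there are only $d$. The delicate point is that the two regimes interleave arbitrarily along a root-to-leaf path, and the unbalanced steps --- which shrink only one side, or only the coordinate set --- are the slow ones; the worst case is a point set concentrated near a single vector in most coordinates (a bounded-range Hamming ball around a point), which stays unbalanced through many rounds of coordinate elimination. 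Getting the trade-off to close requires choosing $\epsilon = 1/\mathrm{poly}(c)$ so that the branching in the balanced regime does not inflate the number of leaves past $2^{(1/2 - \Omega(\mathrm{poly}(1/c)))n}$ while the unbalanced steps still make enough headway, and then verifying that this balance yields savings genuinely polynomial in $1/c$ (rather than, say, $1/2^{c}$ or $1/c^{\log c}$); this quantitative bookkeeping is the crux of the whole argument. One also disposes of the easy structural reductions, e.g.\ a coordinate on which $A$ is literally constant corresponds to a constraint not mentioning $V_1$, which one uses to filter $B$ and then delete before recursing.
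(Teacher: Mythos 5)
Your reduction to Vector Domination is correct and is essentially the paper's (meet-in-the-middle on the two variable halves, slack vectors, $N = 2^{n/2}$ points in dimension $d = cn = 2c\log N$), and your three-way recursion with one discarded quadrant and one dimension-dropping quadrant is the right skeleton. But there is a genuine gap in two places. First, your pivot rule does not guarantee progress in every child. In your ``unbalanced'' case (say $A$ has a value of multiplicity exceeding $(1-\epsilon)|A|$ at coordinate $j$), the child $(A_{\le}, B_{\le}, C)$ can retain almost all of $A$, all of $B$, and all coordinates, so the recursion as described need not make headway; moreover, ``no simultaneously balanced coordinate exists'' does not imply that some single value is $(1-\epsilon)$-heavy on one side, since which side is unbalanced can vary with the threshold $t$. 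The paper's fix is the \emph{weighted median}: take the median of the first coordinates of $A \cup B$ where points of $A$ get weight $|B|$ and points of $B$ get weight $|A|$. This forces $|A^-|/|A| = |B^+|/|B| =: \varepsilon'$, so that either $\varepsilon' \ge \varepsilon$ (balanced: the discarded quadrant $A^- \times B^+$ contains at least an $\varepsilon^2$ fraction of all pairs) or $\varepsilon' < \varepsilon$ (unbalanced: \emph{both} same-dimension children lose all but an $\varepsilon$ fraction of their pairs). Every child then makes measurable progress in one of two quantities (pair count or dimension), which is exactly what your single additive potential $\Phi$ cannot certify under your pivot rule.

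Second, you explicitly defer ``the quantitative bookkeeping'' showing the savings is polynomial in $1/c$, but that bookkeeping \emph{is} the theorem; without it nothing distinguishes $\mathrm{poly}(1/c)$ from, say, $2^{-c}$. The paper's accounting is: cap the number of balanced steps on any root-to-leaf path at $t = \gamma\log N$ with $\gamma = 1/(15\log c)$ and brute-force at the resulting leaves, where at most $(1-\varepsilon^2)^{t} N^2 \le N^{2-\varepsilon^2\gamma\log e}$ pairs survive; then bound the number of root-to-leaf paths by a binomial coefficient, using that any path has at most $d$ dimension drops, at most $t$ balanced steps, and at most $2\log N/\log(1/\varepsilon)$ unbalanced same-dimension steps (each multiplies the pair count by $\varepsilon$). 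Choosing $\varepsilon = c^{-15}$ makes the recursion-tree exponent strictly smaller than the leaf-work exponent, so the leaf work dominates and one gets $s(c) = \Omega\bigl(1/(c^{30}\log c)\bigr)$. To complete your proof you would need to supply both the weighted-median device (or an equivalent guarantee that all three children progress) and this calculation.
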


For the special case of $\cnf$ satisfiability, Schuler
\cite{Schuler_2005_jalg} gives an algorithm that improves over
exhaustive search by an exponential factor if the number of clauses is
linear. With savings (the $s(c)$ in above theorem) that are inverse
polylogarithmic in $c$, as opposed to polynomial, Schuler's algorithm
runs considerably faster.

Recently, Williams \cite{Williams_2014_arxive} gave an algorithm for
$\zeroone$ ILP that improves over exhaustive search even for a
polynomial number of constraints. The algorithm runs in time
$2^{(1-s)n}$. For $s = \frac1{\text{polylog}(m)}$. Since $s$ is
subconstant, even for linear $m$, Williams' result is not directly
comparable to our result. Also note that a subconstant $s$ does not
contradict $\SETH$.

Our result is a follow-up to earlier work of a subset of the authors
\cite{ImpagliazzoPaturiSchneider_2013_focs}, where we considered the
more general class of depth two threshold circuits. Depth two
threshold circuits generalize the problem from a conjunction of linear
constraints to a threshold function of linear constraints. However,
the algorithm for depth two threshold circuits we gave only improves
over exhaustive search by an exponential factor if the number of
nonzero coefficients (across all constraints) is linear in the number
of inputs. Furthermore, the savings of our algorithm for depth two
threshold circuit was exponential in $\frac1c$.

The key idea of our algorithm is to reduce the problem to the
\emph{Vector Domination problem}, the problem of finding a pair of
vectors such that one vector dominates the other in every
coordinate. As such, the main idea stays the same as in our previous
work. The main technical contribution of the current paper is an
algorithm for the Vector Domination problem that improves over the
trivial algorithm for vectors of dimension $O(\log N)$ (where $N$ is
the number of vectors), whereas earlier algorithms, such as the one
used for our previous result, only worked for dimensions up to $\delta
\log N$, where $\delta$ is a sufficiently small number. Whereas
algorithms for small dimensions have been discussed before, in
particular Bentley \cite{Bentley_1980} and Chan \cite{Chan_2005}, we
believe that we give the first algorithm improving over exhaustive
search for all dimensions possible without refuting $\seth$.

%%% Local Variables: 
%%% mode: latex
%%% TeX-master: "ilp"
%%% End: 

\section{ILP and the Vector Domination problem}
\label{Domination}
In this section we give a reduction from $\zeroone$ Integer Linear
Programming to the \emph{Vector Domination} problem.

In all definitions below, $u \geq v$ for two vectors $u$ and $v$ of
the same dimensions is used to denote an element-wise comparison.

\begin{definition}
  Given a matrix $M \in \reals^{n \times m}$ and a vector $r \in
  \reals^{m}$, the \emph{$\zeroone$ Integer Linear Programming}
  problem on $n$ variables and $m$ constraints is to find a vector $x
  \in \{0,1\}^n$ such that $Mx \geq r$.
\end{definition}

\begin{definition}
  Given two sets of $d$-dimensional real vectors $A$ and $B$, the
  \emph{Vector Domination Problem} is the problem of finding two
  vectors $u \in A$ and $v\in B$ such that $u \geq v$.
\end{definition}

Both problems have a trivial exhaustive search algorithm. For
$\zeroone$ ILP, that algorithm runs in time $O\left( 2^n
  \text{poly}(n,m)\right)$, while the trivial algorithm for the Vector
Domination problem runs in time $O(|A| |B| d)$.  

We can reduce $\zeroone$ ILP to the Vector Domination problem, which
shows that if the Vector Domination problem allows for an algorithm
faster than exhaustive search, then so does $\zeroone$ ILP. The
reduction was introduced by Williams \cite{Williams_2005_tcs} for the
special case of Boolean vectors (he called the problem the Cooperative
Subset Query problem).

\begin{lemma}
  \label{lemilp}
  Suppose there is an algorithm for the Vector Domination problem for
  $|A| = |B| = N$ and $d = 2c \log N$ running in time
  $O\left(N^{2-s(c)}\right)$. Then a $\zeroone$ Integer Linear program
  on $n$ variables and $m = cn$ constraints can be solved in time
  $O\left(2^{(1 - s(c)/2) n}\right)$.
\end{lemma}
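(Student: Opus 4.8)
The plan is to split the $n$ variables into two halves of size $n/2$ and enumerate all $2^{n/2}$ partial assignments to each half, turning each partial assignment into a $d$-dimensional vector recording how much of each constraint it satisfies; feasibility of the ILP then becomes a Vector Domination query between the two resulting sets of $N = 2^{n/2}$ vectors. Concretely, write $x = (y,z)$ with $y \in \{0,1\}^{n/2}$ indexing the first block of columns of $M$ and $z \in \{0,1\}^{n/2}$ the second. The constraint $Mx \geq r$ is $M_1 y + M_2 z \geq r$, i.e.\ $M_1 y \geq r - M_2 z$. So I would define $A = \{\, M_1 y : y \in \{0,1\}^{n/2}\,\} \subseteq \reals^m$ and $B = \{\, r - M_2 z : z \in \{0,1\}^{n/2}\,\} \subseteq \reals^m$. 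Each set has size $N = 2^{n/2}$, and the program is feasible if and only if there exist $u \in A$, $v \in B$ with $u \geq v$. The dimension of these vectors is $d = m = cn = 2c \log N$, which is exactly the regime covered by the hypothesized Vector Domination algorithm.

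The remaining steps are bookkeeping: building $A$ and $B$ takes time $O(N \cdot \text{poly}(n,m)) = O(2^{n/2}\,\text{poly}(n,m))$, which is dominated by the $O(N^{2-s(c)}) = O(2^{(1-s(c)/2)n})$ running time of the domination subroutine; and the $\text{poly}(n,m)$ overhead per vector (and the $d$-factor inside the domination algorithm, already absorbed into its stated bound) only contributes polynomial factors, so the overall bound is $O(2^{(1-s(c)/2)n})$ as claimed. One should also note that if $m = cn$ with $c$ not making $d$ an integer, or if $N = 2^{n/2}$ is not attained exactly because $n$ is odd, these are trivially handled by rounding (split into $\lceil n/2\rceil$ and $\lfloor n/2\rfloor$, pad with dummy coordinates), at the cost of constants in the exponent that can be folded into $s$.

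I do not expect a genuine obstacle here — the lemma is essentially a "meet in the middle" packaging step whose only content is identifying the right two sets of vectors and checking that the dimension comes out to $2c\log N$. The one point deserving a sentence of care is the direction of the inequality and which matrix block goes on which side: we need $u \in A$ to be the "dominating" vector and $v \in B$ the "dominated" one, matching the convention $u \geq v$ in the definition of the Vector Domination problem, so $A$ must collect the $M_1 y$ values and $B$ the threshold residuals $r - M_2 z$. Everything else is immediate.
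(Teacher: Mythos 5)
Your proposal is correct and is essentially the paper's own proof: the same meet-in-the-middle split into two halves of $n/2$ variables, with $A$ collecting the partial sums $M_1 y$ and $B$ the residuals $r - M_2 z$, yielding $N = 2^{n/2}$ vectors of dimension $m = cn = 2c\log N$ and the claimed running time. No gaps.
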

\begin{proof}
  Let $M$ and $r$ be the matrix and vector respectively of the
  $\zeroone$ integer linear program.  Separate the variable set into
  two sets $S_1$ and $S_2$ of equal size. For every assignment
  to the variables in $S_1$ and $S_2$, we assign a $c n$-dimensional vector where
  every dimension corresponds to a constraint. Let $\alpha$ be an
  assignment to $S_1$ and let $a \in \reals^{c n}$ be the
  vector with $a_j = \sum_{i\in S_1} M_{i,j} \alpha(x_i)$ and let
  $A$ be the set of $2^{n/2}$ such vectors. For an assignment $\beta$
  to $S_2$, let $b$ be the vector with $b_j = r_j - \sum_{xi \in S_2}
  M_{i,j}\beta(x_i)$ and let $B$ be the set of all such vectors $b$.

  An assignment to all variables corresponds to an assignment to $S_1$
  and an assignment to $S_2$, and hence to a pair $a \in A$ and $b \in
  B$. The pair satisfies all inequalities if and only if $a$ dominates
  $b$. We have $|A| = |B| = N = 2^{n/2}$ and $d = cn = 2c
  \log(N)$. Hence the Vector Domination problem, and therefore the
  $\zeroone$ Integer Linear Program, can be solved in time
  $O\left(N^{2-s(c)}\right) = O\left(2^{(1 - s(c)/2) n}\right)$.
\end{proof}

Note that the reduction can be trivially adapted to any ILP variant
where the variables can take values from a constant size set.

%%% Local Variables: 
%%% mode: latex
%%% TeX-master: "ilp"
%%% End:

\section{An Algorithm for the Vector Domination Problem}
The algorithm is a divide and conquer algorithm, splitting the two
sets $A$ and $B$ into sets $A^+$,$A^-$, $B^+$ and $B^-$ depending on
if the first coordinate is larger or smaller than some value $a$. For
our purposes, we choose $a$ as a weighted median that is computed
across both sets $A$ and $B$.

The weighted median of a collection of real numbers with associated
real weights is a number such the both the total weight of all numbers
smaller than the median and the total weight of all numbers larger
than the median are at most half of the total weight. Note that the
weighted median can be computed in linear time using minor
modifications to the standard algorithms to compute the unweighted
median.

\begin{lemma}
  \label{lemalgo}
  Let $A, B \subseteq \reals^d$ with $|A| = |B| = N$ and $d = c \log
  N$. There is an algorithm for the Vector Domination problem that
  runs in time $O\left(N^{2-s(c)}\right)$ , where $s(c)$ is polynomial
  in $\frac1c$.

\end{lemma}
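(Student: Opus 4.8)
The algorithm is a recursive divide-and-conquer on the dimension $d$. At each step we look at the first coordinate, compute a weighted median $a$ (weighting vectors by... we'll see), and split $A$ into $A^+ = \{u \in A : u_1 \geq a\}$, $A^- = \{u \in A : u_1 < a\}$, and similarly $B^+, B^-$. The key structural observation is that a vector $u \in A$ with $u_1 < a$ cannot dominate any $v \in B$ with $v_1 \geq a$, so the only cross-interactions we must handle are: (i) $A^+$ against $B^+$, (ii) $A^-$ against $B^-$, and (iii) $A^+$ against $B^-$. The third pair is the problem: in case (iii) every $u \in A^+$ automatically beats every $v \in B^-$ on the first coordinate, so we may simply \emph{drop} the first coordinate and recurse on dimension $d-1$ with the full sets $A^+$ and $B^-$. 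Cases (i) and (ii) recurse on dimension $d-1$ as well but on the smaller sets. So one coordinate is consumed per level, and after $d$ levels the dimension is $0$ and domination is trivial (any pair works, so just check nonemptiness).

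**The recursion and why weighted median.** If we used the plain unweighted median, the sets $A^\pm$ and $B^\pm$ would each have size about $N/2$, but the "cross" call (iii) uses $A^+$ of size $\sim N/2$ and $B^-$ of size $\sim N/2$ — its cost is $\sim (N/2)^2$, which only saves a constant factor per coordinate, giving $N^2 / 2^{d} = N^2 / N^c = N^{2-c}$ — wait, that would be \emph{too} good and in fact wrong, because case (iii) doesn't shrink both sides simultaneously in a balanced recursion; the subtlety is that $|A^+| + |A^-| = N$ but the cross term reuses the \emph{large} halves. The correct accounting is via a weighted recursion: assign each vector a weight and choose $a$ to be the weighted median across $A \cup B$ (or across one of them), so that in the cross call (iii) the product $|A^+| \cdot |B^-|$ is controlled. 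I would set up a potential function $\Phi = |A|\cdot|B|$ (or a weighted variant) and track how it evolves. Let $p = |A^+|/N$, so $|A^-| = (1-p)N$; choosing $a$ as the weighted median of the multiset of first coordinates of $A \cup B$ with appropriate weights forces the $B$-split to be the complementary fraction, i.e. $|B^-|/N \approx p$, $|B^+|/N \approx 1-p$. Then the three recursive calls have potentials $p^2 N^2$, $(1-p)^2 N^2$, and $p \cdot p \cdot N^2 = p^2 N^2$ for the cross term — and the point is that when $p$ is bounded away from $1$, the total $2p^2 + (1-p)^2$ is bounded away from $3$... hmm, but if $p$ is near $1$ this fails. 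The actual trick must be that the cross call (iii) pairs $A^+$ (fraction $p$) with $B^-$ (fraction $p$), while (i) pairs $A^+$ with $B^+$ (fractions $p$ and $1-p$), so more carefully the recursion on potential is $\Phi \mapsto \Phi\big(p(1-p) + p(1-p) + \ldots\big)$; I'd choose the median so that the \emph{same} fraction $p$ appears for $A^+$ and $B^+$ — wait. Let me restate: pick $a$ = weighted median of $A$'s first coordinates so $|A^+| = |A^-| = N/2$; then $|B^+| = q N$, $|B^-| = (1-q)N$ for some $q$. The three calls cost (potentials) $\tfrac12 q N^2$, $\tfrac12(1-q)N^2$, and $\tfrac12(1-q)N^2$ — cross is $A^+$ ($N/2$) vs $B^-$ ($(1-q)N$). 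Total coefficient $\tfrac12 q + (1-q) = 1 - q/2 \le 1$. That still only gives savings when $q$ is bounded below. So symmetrically we should also be able to split by $B$'s median, and take whichever of the two splits is "more balanced" on the opposing set — or split by the median of $A\cup B$ combined. The combined weighted median of $A \cup B$ (each vector weight $1/N$, total weight $2$) guarantees $|A^+| + |B^+| \le N$ and $|A^-| + |B^-| \le N$, and with a bit of case analysis one shows the sum of the three potentials is at most $\gamma N^2$ for some $\gamma < 2$ depending polynomially... no: the recursion tree has \emph{three} children per node but only $d = c\log N$ depth, so the branching factor matters enormously. Three children, depth $d$: leaves number $3^d = 3^{c\log N} = N^{c\log 3}$, which is way more than $N^2$. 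So the savings cannot come from the branching structure alone — it must come from the potentials shrinking geometrically fast enough to beat the $3$-way branching. Concretely, if at every node the three children's potentials sum to at most $\beta$ times the parent's potential, the total work is $\le \beta^d \cdot N^2 \cdot \mathrm{poly} = \beta^{c \log N} N^2 = N^{2 + c\log_2\beta}$, and we need $\beta < 1$ to win, with $\log_2(1/\beta) = \Omega(\mathrm{poly}(1/c))$ — wait, we need the exponent $2 + c\log_2\beta < 2$, i.e. $\beta < 1$, and the savings is $s(c) = c\log_2(1/\beta)$, so we actually need $\log_2(1/\beta) = \Omega(\mathrm{poly}(1/c)/c) = \Omega(\mathrm{poly}(1/c))$ — fine, so it suffices that $\beta \le 1 - \mathrm{poly}(1/c)$, and one can afford $\beta$ very close to $1$.

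**Getting $\beta<1$: the key lemma.** The heart is: can we always choose the split coordinate value $a$ (and, crucially, which coordinate to split on — we get to pick the best of the $d$ remaining coordinates, or reorder adaptively) so that $|A^+|\cdot|B^+| + |A^-|\cdot|B^-| + |A^+|\cdot|B^-| \le \beta\, |A|\,|B|$ with $\beta < 1$? Note $|A^+||B^+| + |A^-||B^-| + |A^+||B^-| = |A^+|(|B^+|+|B^-|) + |A^-||B^-| = |A^+|\cdot|B| + |A^-||B^-| = |A|\,|B| \cdot\big(\tfrac{|A^+|}{|A|} + \tfrac{|A^-|}{|A|}\cdot\tfrac{|B^-|}{|B|}\big)$. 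Writing $\alpha = |A^-|/|A|$, $\delta = |B^-|/|B|$, the coefficient is $(1-\alpha) + \alpha\delta = 1 - \alpha(1-\delta)$. So we need a coordinate and threshold where $\alpha$ (the fraction of $A$ below threshold) and $1-\delta$ (the fraction of $B$ at-or-above threshold) are \emph{both} bounded away from $0$ — i.e., a coordinate where "many $A$-vectors are small and many $B$-vectors are large" on that coordinate. If no such coordinate exists, then on every coordinate, either almost all of $A$ is large or almost all of $B$ is small; I'd argue this near-degenerate situation can be handled directly (e.g., by fixing that coordinate's comparison outcome for almost all pairs and recursing on a tiny exceptional set, or by a direct combinatorial argument that domination is then easy to decide). \textbf{I expect this dichotomy — efficiently winning when a "balanced-in-the-right-direction" coordinate exists, and handling the degenerate case when none does — to be the main obstacle}, together with pinning down the precise polynomial dependence $s(c) = \mathrm{poly}(1/c)$, which requires balancing how much the potential drops against the overhead of the recursion and the cost of the base cases where $d$ hits $0$ but $|A|,|B|$ are still $\mathrm{poly}$-large. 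Finally I'd verify the running time recurrence $T(N_A, N_B, d) \le T(\cdot,\cdot,d-1)\cdot 3 \text{ (with shrunk sizes)} + O(N_A + N_B)$ solves to $O(N^{2 - s(c)})$ by the $\beta^d$ potential argument above, and note the weighted-median computation is linear per node so contributes only polynomial overhead.
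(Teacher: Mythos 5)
Your skeleton matches the paper's: the three-way split on the first coordinate, dropping that coordinate in the cross case $A^+\times B^-$, the weighted median over $A\cup B$ (with $A$-elements weighted $|B|$ and vice versa, which forces $|A^-|/|A| = |B^+|/|B| =: \varepsilon'$), and the observation that the surviving pair count is $\bigl(1-\alpha(1-\delta)\bigr)|A||B| = (1-\varepsilon'^2)|A||B|$. You also correctly identify the crux: when $\varepsilon'$ is tiny, no pairs are discarded and a per-node potential drop $\beta<1$ is unavailable. But you leave that case unresolved, and the "key lemma" you hope for --- that one can always find a coordinate and threshold with both $\alpha$ and $1-\delta$ bounded away from $0$ --- is simply false (take $A$ all-ones and $B$ all-zeros: every coordinate is maximally unbalanced). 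Moreover the real difficulty is not a clean global dichotomy you can dispatch once: balanced and unbalanced splits interleave arbitrarily down the recursion, so no single "degenerate-case" preprocessing step handles it.

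The missing idea is that the analysis must be per root-to-leaf \emph{path}, not per node. The paper recurses on all three children in the unbalanced case too, makes no attempt to discard pairs there, and instead observes: (i) the cross child loses a dimension, so at most $d$ such steps occur on any path; (ii) in an unbalanced step the children $A^+\times B^+$ and $A^-\times B^-$ have pair count at most $\varepsilon\,|A||B|$, so such steps occur at most $2\log N/\log(1/\varepsilon)$ times on any path; (iii) balanced steps are \emph{capped} at $t=\gamma\log N$ by an explicit counter, after which the algorithm brute-forces the remaining pairs. Then (i)--(iii) bound the recursion-tree size by a path-counting argument $\binom{d+r}{r}2^r\cdot O(N) = O(N^{1.79})$ for suitable $\gamma,\varepsilon = \mathrm{poly}(1/c)$, while the total brute-force work at the $t=0$ leaves is at most $(1-\varepsilon^2)^t N^2 = N^{2-\mathrm{poly}(1/c)}$ because each balanced step discards an $\varepsilon'^2\ge\varepsilon^2$ fraction of the surviving pairs. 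In other words, either a path accumulates $t$ balanced steps (and then few pairs reach its leaf) or it terminates quickly via dimension/size exhaustion; the savings never come from a uniform geometric decay of the potential. Without this accounting --- in particular the cap $t$ and the bound on how often unbalanced non-dimension-decreasing steps can recur --- your recursion has no termination/size guarantee that beats $N^2$, so the proof as proposed does not go through.
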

\begin{proof}
  We assume w.l.o.g. that $c \geq 4$. Let $\varepsilon = \frac1{c^{15}}$,
  $\gamma = \frac{1}{15 \log{c}}$ and $t = \gamma \log N$. 

  Furthermore, let $a$ be the weighted median of the first coordinates
  of $A \cup B$, where all numbers from $A$ have weight $|B|$ and all
  number from $B$ have weight $|A|$. Further let $A^+ \subseteq A$
  consist of all vectors where the first coordinate is larger than $a$
  and $A^-$ consist of all vectors where the first coordinate is
  smaller than $a$. Similarly, split $B$ into two sets $B^+$ and
  $B^-$. For vectors where the first coordinate is exactly $a$, it is
  sufficient to split the vectors evenly between the two possible
  sets, as long as we do not at the same time add vectors to $A^-$ and
  $B^+$. This rounding is equivalent to adding a small positive noise
  to all vectors in $A$ and a small negative noise to all vectors in
  $B$.

  Now a vector $u \in A$ can only
  dominate a vector $v \in B$ in one of three cases:
  \begin{enumerate}
  \item $u \in A^+$ and $v\in B^+$
  \item $u \in A^-$ and $v\in B^-$
  \item $u \in A^+$ and $v\in B^-$ 
  \end{enumerate}

  Also, in the third case any vector in $A^+$ dominates any vector in
  $B^-$ on the first coordinate, hence our recursive algorithm can
  recurse on $d-1$ dimensions.

  Since we split at a weighted median where both $A$ and $B$ have the
  same total weight, we have $\frac{|A^-|}{|A|} = \frac{|B^+|}{|B|} =:
  \varepsilon'$. We distinguish two cases, a \emph{balanced} case
  where $\varepsilon' \geq \varepsilon$ and an \emph{unbalanced} case
  where $\varepsilon' < \varepsilon$. In both the balanced and the
  unbalanced cases, we recurse on all three subproblems. In the
  balanced case, we additionally decrement $t$. Once $t = 0$ we solve
  the Vector Domination problem on the remaining vectors by exhaustive
  search.

  To bound the runtime of above algorithm, we consider the recursion
  tree. We first bound the time spent on exhaustive search in leaves
  where $t= 0$. Each of the $N^2$ possible pairs of dominating vectors
  appears in at most one of the subcases. Furthermore, in the balanced
  case there are at least $\varepsilon^2 N^2$ pairs where one vector
  is in $A^-$ and the other is in $B^+$ that are not considered in any
  subcase. Since there are $t$ balanced cases on the path from the
  root to any leaf with $t=0$, the total time spent on exhaustive
  search is bounded by
  \begin{equation*}
    (1- \varepsilon^2)^t N^2 \leq e^{-\varepsilon^2 \gamma \log N} N^2
    = N^{2-\log(e) \varepsilon^2 \gamma} = N^{2-\frac{\log(e)}{15
        c^{30} \log(c)}} = N^{2-\text{poly}(1/c)}
  \end{equation*}
  For $c=4$, we get constant savings of $2^{-66}$.

  To bound the size of the recursion tree we bound the number of
  possible paths from the root to any leaf. On any path, there are at
  most $d$ steps that decrease the dimension. Furthermore, there are
  at most $t$ balanced cases on any path. For the unbalanced case, in
  both subproblems 1 and 2, where the dimension does not decrease, the
  number of pairs of vertices decreases by a factor of at most
  $\varepsilon$, hence this can happen at most
  $\frac{\log(N^2)}{\log(1/\varepsilon)}$ times along any path. Let $r
  = t + \frac{\log(N^2)}{\log(1/\varepsilon)} = \left(\gamma +
    \frac2{\log(1/\varepsilon)}\right) \log N = \frac1{5 \log(c)}
  \log(N)$. Taking into account that the time spent on computing the
  median in every node is linear in $|A| + |B| = O(N)$, the total time
  is bounded by
  \begin{align*}
    O(N) \binom{d+r}{r} 2^r &\leq
    O(N) \left(e\left(1 + 5 c
        \log(c)\right)\right)^{\frac1{5\log(c)} \log N}
    N^{\frac1{5\log(c)}} \\ &= N^{1+\log\left(e\left(1 + 5 c
            \log(c)\right)\right)\frac1{5\log(c)} +
        \frac1{5\log(c)}}
  \end{align*}

  The bound is monotonely decreasing for $c\geq 4$ and at $c=4$ the
  bound is $O\left(N^{1.781}\right)$.

  The overall runtime is therefore always dominated by the time spent
  on exhaustive search in the leaves, which gives us the claim
  immediately.
\end{proof}

The algorithm can be extended for the case where $|A| \neq
|B|$. If $|A| = O(|B|)$, above analysis immediately gives us a runtime
of $(|A| |B|)^{1-s(c)/2}$. If $|A| \gg |B|$, we partition $A$ into $\frac{|A|}{|B|}$ subsets of
size $|B|$ each and solve the Vector Domination problem for each
subset individually. The whole algorithm then runs in time
$|B|^{2-s(c)} \frac{|A|}{|B|} = |A| |B|^{1-s(c)}$. 

Theorem \ref{thmmain} then follows directly from Lemmas
\ref{lemilp} and \ref{lemalgo}.

%%% Local Variables: 
%%% mode: latex
%%% TeX-master: "ilp"
%%% End: 

%%% Local Variables: 
%%% mode: latex
%%% TeX-master: "ilp"
%%% End: 

\section{Conclusions}
We give a first algorithm for $\zeroone$ integer programs on $n$
variables and $cn$ constraints that improves over exhaustive search by
a factor $2^{\text{poly(1/c)}n}$. The result does generalize to ILP
where the variables are constrained by any finite set of values, not
just Boolean variables.

Under the Strong Exponential Time Hypothesis, this is qualitatively
optimal in the sense that we can only expect exponential improvement
over exhaustive search if the number of constraints is
linear. However, for the special case of formulas in conjunctive
normal form the best algorithms achieve savings that are
polylogarithmic in $\frac1c$ \cite{Schuler_2005_jalg}. It is open if
we can get the same for $\zeroone$ ILP.

This result is not comparable to our earlier work
\cite{ImpagliazzoPaturiSchneider_2013_focs}. While $\zeroone$ ILP is a
special case of depth two threshold circuits, our result requires a
linear number of constraints instead of a linear number of wires. It
is still open if we can find an algorithm for general depth two
threshold circuits that works for a linear number of gates
(i.e. constraints).

Lastly, there are countless problems that reduce to Integer
Programming in a natural way. Some of these applications might benefit
from this result. Alternatively, there might be problems that reduce
directly to the Vector Domination problem and benefit from the
corresponding subroutine.

\bibliographystyle{plain}
\bibliography{TeX/bib/complexity.bib}
\end{document}